\newcommand{\mb}{\mathbb}
\newcommand{\mc}{\mathcal}
\newcommand{\mtt}{\mathtt}
\newcommand{\tx}{\text}
\newcommand{\wgt}{\tx{\normalfont wt}}
\newcommand{\HGP}{\tx{\normalfont HGP}}
\newcommand{\rank}{\tx{\normalfont rank}}
\newcommand{\obs}{\tx{\normalfont obs}}
\newcommand{\iref}[2]{(\hyperref[#2]{\ref*{#1}.\ref*{#2}})}
\newcommand{\RS}{\mathrm{RS}}
\newcommand{\SSP}{\mtt{SS}}
\newcommand{\FF}{\mathbb{F}}
\newcommand{\F}{\mathbb{F}}
\newcommand{\eps}{\epsilon}
\newcommand{\Hamm}{\mtt{H}}
\newcommand{\transpose}{\mtt{t}}
\newcommand{\Nbr}{\tx{\normalfont Nbr}}
\newcommand{\CC}{\mathbb{C}}
\newcommand{\calC}{\mathcal{C}}
\newcommand{\vs}{s}
\newcommand{\vE}{E}
\newcommand{\vx}{x}
\newcommand{\vy}{y}
\newcommand\nodot[1]{}
\def\final{0}  
\def\iflong{\iffalse}
\newtheorem{theorem}{Theorem}[section]
\newtheorem{Definition}[theorem]{Definition}
\newtheorem{claim}[theorem]{Claim}
\newtheorem*{claim*}{Claim}
\newtheorem{proposition}[theorem]{Proposition}
\newtheorem{lemma}[theorem]{Lemma}
\newtheorem{corollary}[theorem]{Corollary}
\newtheorem{remark}[theorem]{Remark}
\newif\ifshowcorrections
\definecolor{correctioncolor}{RGB}{0,90,160}
\title{Noisy-Syndrome Decoding of Hypergraph Product Codes}
\date{}
\author{
Venkata Gandikota\thanks{Syracuse University, Syracuse, USA}
\and
Elena Grigorescu\thanks{University of Waterloo, Waterloo, Canada}
\and
Vatsal Jha\thanks{Purdue University, West Lafayette, USA}
\and
S. Venkitesh\thanks{Institute for the Theory of Computing, Ben Gurion University of the Negev, Beersheva, Israel}
}
\begin{document}

\maketitle

\begin{abstract}
    Hypergraph product codes  are a prototypical family of quantum codes with state-of-the-art decodability properties. 
    In this work we consider the \emph{noisy} syndrome decoding problem and  exact recovery problem for hypergraph product codes and show a reduction to the decoding and exact recovery of classical codes in the noisy syndrome setting. 

    Our results hold for a broad class of codes admitting efficient syndrome decoding, including Sipser-Spielman codes and Reed–Solomon codes.
\end{abstract}
\tableofcontents
\section{Introduction}

A central problem in quantum fault tolerance is the construction of quantum codes with large distance, efficient decoding algorithms, and simple structure. Following the seminal work of Tillich and Z\'{e}mor \cite{Tillich13}, a sequence of length-$N$ quantum codes with distance $N^{1/2+\Omega(1)}$ was proposed in \cite{hastings2021fiber, breuckmann2021balanced, hastings2023weight}. This was later improved to near-linear distance $\Theta(N/\log{N})$ by Panteleev and Kalachev \cite{panteleev2022quantum}, and subsequently to linear distance and linear dimension in \cite{panteleev2022asymptotically, leverrier2022quantum, dinur2023good}.

The \emph{hypergraph product} (HGP) codes introduced in \cite{Tillich13} have played a central role in this line of work and continue to yield state-of-the-art results. The HGP framework combines arbitrary linear codes $C_{1},C_{2}$ with good parameters to yield the $HGP(C_{1},C_{2})$ with good parameters.  Most relevant to our work, Golowich and Guruswami \cite{golowich2024decoding} showed that HGP codes constructed from an expander code $C_1$ and the repetition code $C_2$ can be decoded from a constant fraction of errors relative to the code distance. Their approach reduces the quantum decoding task to \emph{syndrome decoding} of the classical constituent codes $C_1$ and $C_2$, making the procedure entirely classical.

At the core of these algorithms lies a classical computation. The decoder is given a syndrome, obtained by performing a small number of low-weight parity-check measurements on the quantum state, and must compute a correction to apply to the physical qubits. 
 However, in practice the syndrome measurements themselves are performed on noisy hardware, so the reported syndrome bits can be wrong independently of whether any data error occurred. Hardware faults such as crosstalk between simultaneously measured ancillas, or qubit decoherence tend to flip the measured syndrome bits. Moreover, experiments on IBM superconducting devices have directly confirmed that repeated syndrome measurement cycles produce statistics inconsistent with standard independent noise models~\cite{gicev2024syndrome}, and that qubit leakage out of the computational subspace causes persistent faults across multiple syndrome rounds that standard decoders are not designed to handle~\cite{lacroix2024soft}. 
This gives rise to the \emph{noisy syndrome decoding} problem, where both the data and the syndrome are subject to errors. 

In this setting, two natural goals arise. The first, which we call \emph{stable decoding}, requires that the decoder's output degrade gracefully with syndrome noise: small corruption in the syndrome should lead to only small errors in the decoder's estimate. 
More formally, we say that a code $\mathcal{C}$ is $(t,\eta,\alpha)-$\emph{stable noisy syndrome decodable} if there exists a decoder $\mathcal{D}$ 
that for any data error vector $e$ and a syndrome error vector $z$ with $\wgt_\Hamm(e)\leq t$ and $\wgt_\Hamm(z)\leq \eta$, 
outputs an estimate $\mathcal{D}(He + z) = \hat{e}$ such that 
$\wgt_\Hamm(e-\hat{e})\leq \alpha\cdot \wgt_\Hamm(z).$ Here, for $w\in \F_q^n$, $\wgt_\Hamm(w)=|\{i| w_i\ne 0\}|$.
The construction of stable noisy syndrome decoding was previously considered by \cite{spielman1995linear} using the 
expander codes of \cite{SipserSpielman1996expander}. 
This notion of stable noisy syndrome decoding extends naturally to quantum CSS codes, as studied in \cite{reshape, leverrier2015quantumexpander, fawzi2020constant}. We defer the formal definition to Definition \ref{def:quantum-nsd}.

The second goal, which we call \emph{exact recovery}, is more stringent, requiring the decoder to recover the true error exactly even in the presence of syndrome noise, i.e., $\wgt_\Hamm(e-\hat{e})=0$. 
We provide formal definitions for both notions, in the classical and quantum settings, in Section~\ref{sec:noisy-syndrome-defs}.

\subsection{Our results}
We show that if the classical constituent code is \emph{nice enough}, then the resulting HGP code inherits efficient noisy-syndrome decoding and exact recovery via a reduction to the classical problem. We make this precise in two complementary theorems.

\begin{theorem}[Informal, Stable decoding version]\label{thm:main-informal-nonnoisy}
	Let \(\FF_q\) be a finite field of characteristic \(2\).
	Let \(C\) be an explicit \(\FF_q\)-linear code with parameters \([n, \Theta(n), \Theta(n)]\).
	Suppose \(C\) admits $(\Theta(n),\Theta(n),O(1))$-\emph{stable noisy syndrome decoder} that runs in $T(n)$ time. 

	Then there exists an explicit hypergraph product code
	which admits \emph{stable noisy syndrome decoding} from
	\(t=\Theta(n)\) data errors and
	\(\eta=\Theta(n)\) syndrome errors, with $\alpha=1/2$
	and total decoding time \(O(nT(n)+n^{3})\).
\end{theorem}

\noindent We instantiate \ref{thm:main-informal-nonnoisy} with HGP codes obtained from the expander codes of~\cite{SipserSpielman1996expander} -- see  Corollary \ref{cor: ss-instatiation} and Corollary \ref{cor:zemor-syndrome-decoder}. Indeed, the decoders in~\cite{SipserSpielman1996expander} can be viewed as noisy syndrome decoders.
For the sake of completeness, we explicitly describe these syndrome decoding variants
 for expander codes obtained from bipartite vertex expanders (see Algorithm \ref{alg:expander-decoding-flags}) and from spectral expanders (see Algorithm \ref{algo:syndrome-zemor}), respectively. Hence, we obtain explicit HGP codes from expander-based codes that admit \((\Theta(n),\Theta(n),1/2)\)-stable noisy syndrome decoders that run in time \(O(n^3)\) as $T(n)=O(n)$ for the aforementioned codes.

\begin{theorem}[Informal, Exact recovery version]\label{thm:main-informal-noisy}
	Let \(\FF_q\) be a finite field of characteristic \(2\).
	Let \(C \subseteq \FF_q^{n}\) be an explicit linear code such that
	both \(C\) and its dual \(C^\perp\) have parameters $[n,\Theta(n),\Theta(n)]$, and they both have  syndrome decoders that
	correct a linear number of syndrome and data errors
	in time \(T(n)\).

	Then there exists a hypergraph product code that can be corrected from $O(n)$ adversarial data errors even in the presence of $O(n)$ syndrome errors in time \(O(nT(n)+n^{3})\).
\end{theorem}

\noindent We instantiate  Theorem \ref{thm:main-informal-noisy} with HGPs obtained from Reed-Solomon (RS) codes (see Corollary \ref{cor:RS}).  Syndrome decodability for RS codes up to half the minimum distance is precisely~\cite[Theorem 2 and Algorithm 2]{SSB07}.

\begin{remark}
We emphasize that the stabilizer checks of the quantum CSS codes we construct have weight $O(\sqrt{N})$, where $N$ is the code length; that is, the check weights are sublinear but not necessarily constant. An interesting open question is whether our techniques can be extended to yield quantum CSS codes with $O(1)$ weight stabilizer checks that still admit efficient noisy syndrome decoding and recovery.    
\end{remark}

\subsection{Noisy Syndrome Decoding: Formal Definitions}\label{sec:noisy-syndrome-defs}

We now make the notions of stable decoding and exact correction precise.

\subsubsection*{Classical setting}

We begin with stable noisy syndrome decoding, which captures the requirement that a small amount of syndrome noise should cause only a proportionally small error in the decoder's output.

\begin{Definition}[Stable Noisy Syndrome Decoding (Classical)]\label{def:classical-nsd}
	Let $C \subseteq \FF_q^n$ be a linear code with parity-check matrix
	$H \in \FF_q^{m\times n}$.
	We say that $C$ admits \emph{$(t,\eta,\alpha)$-stable noisy syndrome decoding}
	if there exists a decoding algorithm $\mathcal{D}$ such that for all data errors
	$e \in \FF_q^n$ and syndrome errors $z \in \FF_q^m$ satisfying
	\[
	\wgt_\Hamm(e) \le t \quad \text{and} \quad \wgt_\Hamm(z) \le \eta,
	\]
	given access to the corrupted syndrome
	\[
	\tilde{s} = He + z,
	\]
	the decoder outputs an estimate $\hat e = \mathcal{D}(\tilde{s})$ satisfying
	\[
	\wgt_\Hamm(\hat e - e) \le \alpha \cdot \wgt_\Hamm(z).
	\]
\end{Definition}

\begin{remark}\label{lem:stability}
	Two aspects of this definition are worth highlighting.
	\begin{itemize}
		\item When $z=0$, the definition requires exact recovery of $e$.
		\item When syndrome noise is present, the guarantee is one of \emph{stability}: the decoder's output deviates from the true error by at most a multiplicative factor of the syndrome noise. In particular, if decoding is repeated with fresh syndrome noise $z'$, the accumulated residual error remains $O(\wgt_\Hamm(z)+\wgt_\Hamm(z'))$.
	\end{itemize}
\end{remark}

The stable decoding guarantee allows the decoder's output to be slightly wrong when the syndrome is noisy. A stronger requirement is that the decoder recover the data error \emph{exactly}, even in the presence of syndrome noise.

\begin{Definition}[Exact Recovery (Classical)]\label{def:classical-nsc}
	Let $C \subseteq \FF_q^n$ be a linear code with parity-check matrix
	$H \in \FF_q^{m\times n}$.
	We say that $C$ admits \emph{$(t,\eta)$-exact recovery}
	if there exists a decoding algorithm $\mathcal{D}$ such that for all data errors
	$e \in \FF_q^n$ and syndrome errors $z \in \FF_q^m$ satisfying
	\[
	\wgt_\Hamm(e) \le t \quad \text{and} \quad \wgt_\Hamm(z) \le \eta,
	\]
	given access to the corrupted syndrome
	\[
	\tilde{s} = He + z,
	\]
	the decoder outputs the exact error vector $e$.
\end{Definition}

\subsubsection*{Quantum (CSS) setting}

Both notions extend naturally to CSS codes, which are built from a pair of classical codes $C_{X}$ and $C_{Z}$ satisfying $C_{X}^{\perp}\subseteq C_{Z}$. There are two differences from the classical case. First, the decoder must handle corrupted $X$- and $Z$-syndromes simultaneously. Second, correctness is measured up to \emph{stabilizer equivalence}. 
A CSS code has a distinguished set of operators, called stabilizers, that act trivially on the code space, so two errors that differ by a stabilizer are physically indistinguishable and must be treated as equivalent. 
Therefore, we measure errors using the \emph{stabilizer-reduced weight}
\[
|e_{\alpha}|_R \coloneqq \min_{c \in C_{\bar{\alpha}}} \wgt_\Hamm(e_{\alpha}+c),
\]
where $\alpha\in \{X,Z\}$ and $\bar{\alpha}=Z$ if $\alpha=X$ and $\bar{\alpha}=X$ if $\alpha=Z$.
\begin{Definition}[Stable Noisy Syndrome Decoding (Quantum, CSS)]\label{def:quantum-nsd}
	Let $C = \mathrm{CSS}(C_X,C_Z)$ be a CSS quantum code with parity-check matrices
	$H_X$ and $H_Z$.
	We say that $C$ admits \emph{$(t,\eta,\alpha)$-stable noisy syndrome decoding}
	if there exists a decoding algorithm $\mathcal{D}ec$ such that for all Pauli errors
	$e=(e_X,e_Z)$ and syndrome errors $D=(D_X,D_Z)$ satisfying
	\[
	|e_X|_R \le t,\quad |e_Z|_R \le t,
	\qquad\text{and}
	\qquad \wgt_\Hamm(D_X) \le \eta,\quad \wgt_\Hamm(D_Z) \le \eta,
	\]
	given access to the corrupted $X$- and $Z$-syndromes
	\[
	\tilde{s}_X = H_X e_Z + D_X,
	\qquad
	\tilde{s}_Z = H_Z e_X + D_Z,
	\]
	the decoder outputs corrections \((\hat e_X,\hat e_Z) = \mathcal{D}ec(\tilde{s}_X,\tilde{s}_Z)\) satisfying
	\[
	|e_X + \hat e_X|_R \le \alpha \cdot \wgt_\Hamm(D_Z),
	\qquad
	|e_Z + \hat e_Z|_R \le \alpha \cdot \wgt_\Hamm(D_X).
	\]
\end{Definition}

\begin{Definition}[Exact Recovery (Quantum, CSS)]\label{def:quantum-nsc}
	Let $C = \mathrm{CSS}(C_X,C_Z)$ be a CSS quantum code with parity-check matrices
	$H_X$ and $H_Z$.
	We say that $C$ admits \emph{$(t,\eta)$-exact recovery}
	if there exists a decoding algorithm $\mathcal{D}ec$ such that for all Pauli errors
	$e=(e_X,e_Z)$ and syndrome errors $D=(D_X,D_Z)$ satisfying
	\[
	|e_X|_R \le t,\quad |e_Z|_R \le t,
	\qquad\text{and}
	\qquad \wgt_\Hamm(D_X) \le \eta,\quad \wgt_\Hamm(D_Z) \le \eta,
	\]
	given access to the corrupted $X$- and $Z$-syndromes
	\[
	\tilde{s}_X = H_X e_Z + D_X,
	\qquad
	\tilde{s}_Z = H_Z e_X + D_Z,
	\]
	the decoder outputs corrections \((\hat e_X,\hat e_Z) = \mathcal{D}ec(\tilde{s}_X,\tilde{s}_Z)\) such that:
	\[\hat{e}_{X}=\arg \min_{c \in C_{Z}} \wgt_\Hamm(e_{X}+c),\; \qquad \hat{e}_{Z}=\arg \min_{c \in C_{X}} \wgt_\Hamm(e_{Z}+c)\]
\end{Definition}
\subsection{Proof Overview}

As discussed above, our results address two aspects of decoding in the presence of syndrome noise: \emph{error reduction} (stable decoding) and \emph{error correction} (exact recovery). We briefly sketch the ideas behind each.

\paragraph*{The starting point: splitting the noisy syndrome equation.}
The generic noisy syndrome equation for a hypergraph product code built from parity-check matrices $H_1$ and $H_2$ takes the form
\begin{equation}\label{eq:hgp-syndrome}
\vs + \vE = (H_1 \otimes I)\, \vx + (I \otimes H_2^\transpose)\, \vy,
\end{equation}
where $x$ and $y$ are the data errors, $E$ is the syndrome error, and we suppress dimensions for readability.

The key idea is to absorb the first term into the syndrome noise by writing $\vE' \coloneqq \vE + (H_1 \otimes I)\,\vx$, so that \eqref{eq:hgp-syndrome} becomes
\begin{equation}
\vs + \vE' = (I \otimes H_2^\transpose)\, \vy.
\end{equation}
Using the Kronecker product structure, this splits into independent blocks indexed by $i$:
\begin{equation}\label{eq:block-syndrome}
\vs^{(i)} + \vE'^{(i)} = H_2^\transpose\, \vy^{(i)},
\end{equation}
where the range of $i$ is determined by the dimensions of $H_1$. Each block is now a noisy syndrome equation for $H_2^\transpose$. If the classical code with parity check matrix $H_2^\transpose$ admits good noisy syndrome decoding or correction, we can recover (or approximate) $\vy^{(i)}$ and hence reconstruct $\vE'$. Unfolding the definition $\vE' = \vE + (H_1 \otimes I)\,\vx$ then yields a second noisy syndrome instance, this time for $H_1$, which we solve in a similar fashion.

\paragraph*{Stable noisy syndrome decoding via expander codes.}
For stable noisy syndrome decoding, we instantiate both $H_1$ and $H_2$ using Sipser--Spielman expander codes, whose classical noisy syndrome decoders run in near-linear time. The two-stage reduction sketched above then yields a quantum decoder that corrects $\Theta(\sqrt{N})$ errors in $\Theta(N)$ time, where $N$ is the length of the HGP code.

\paragraph*{Exact recovery via the augmented parity check matrix.} 
For exact recovery, we exploit the observation that the noisy syndrome equation for a parity-check matrix $H$ can be rewritten as
\begin{equation}
s = [I \;:\; H] \begin{bmatrix} \vE \\ \vx \end{bmatrix},
\end{equation}
which is simply a standard (noiseless) syndrome decoding problem for the augmented matrix $[I : H]$. If the code with parity check $[I : H]$ has favorable properties like large distance and an efficient syndrome decoder then one can correct the data error $\vx$ exactly, even in the presence of syndrome noise $\vE$. 
Polynomial codes (Reed--Solomon) as well as Sipser--Spielman codes give rise to parity-check matrices $[I : H]$ with precisely these properties, yielding exact recovery in time $\Theta(N^{3/2})$.

\paragraph*{Why error reduction does not imply error correction.}
It is worth pointing out that one cannot simply iterate the error reduction procedure to achieve exact correction. There are two fundamental obstacles. First, it is not guaranteed that the weight of the error strictly decreases after each iteration as it may plateau. Second, each new iteration requires a fresh syndrome computation, which may introduce additional noise. For these reasons, error correction requires genuinely different techniques.

\subsection{Related Work}

The problem of decoding hypergraph product (HGP) codes has been studied in a
number of works, which differ significantly from ours in either their noise models,
algorithmic approaches, or performance guarantees. As mentioned earlier our main contribution is the exact recovery in the presence of adversarial syndrome noise, which was not considered before.

\paragraph*{Quantum expander codes and small-set-flip}
Leverrier, Tillich, and Zémor~\cite{leverrier2015quantumexpander}
introduced quantum expander codes together with the small-set-flip decoder, providing the
first linear-time decoding algorithm for a family of HGP codes.
Their analysis handles \emph{adversarial} data errors of weight up to $\Theta(\sqrt{N})$ with noise-free syndromes.
Subsequent work of Fawzi, Grospellier, and Leverrier~\cite{fawzi2020constant} extended the analysis to stochastic data and syndrome errors, establishing threshold and fault-tolerance results showing that decoding succeeds with high probability below a certain noise rate.
 
In contrast, our work considers a \emph{fully adversarial noise model} for both
data errors and syndrome errors.
Rather than analyzing a specific quantum decoding algorithm such as
small-set-flip, we develop a general \emph{reduction-based framework} showing
that noisy-syndrome decodability and correction of HGP codes follows from corresponding
classical noisy-syndrome decoding guarantees of the underlying constituent codes.
Our results therefore apply beyond quantum expander codes and do not rely on
probabilistic noise assumptions.
For the adversarial noise model, \cite{fawzi2020constant} provides a stable noisy-syndrome decoder for quantum expander codes of length $N$, against $O(\sqrt{N})$ adversarial errors in the syndrome and with a $\Theta(N)$ running time, which is quite close to our runtime $O(N^{3/2})$ in Theorem~\ref{thm:main-informal-nonnoisy} when instantiated with expander codes. In our construction, we are able to \emph{exactly recover} the data errors even in the presence of $O(\sqrt{N})$ adversarial syndrome errors with time complexity $\Theta(N^{3/2})$.

\paragraph*{ReShape decoder}
The ReShape decoder~\cite{reshape} provides a general adversarial decoder for
arbitrary $[[N,K,D]]$ hypergraph product codes, assuming access to a classical decoder for the underlying code.
The algorithm makes $O(K)$ calls to the classical decoder and incurs an
additional $O(N^2)$ overhead (Theorem 1 in \cite{reshape}).
In the typical regime where $K = \Theta(N)$, this yields $O(N)$ oracle calls
and an overall running time of $O(N^2)$ (or $O(N)\cdot T$, where $T$ is the running time of the classical decoder).
For instance, when instantiated with Sipser--Spielman expander codes, where
$T = O(\sqrt{N})$, the total runtime remains superlinear.

Crucially, the ReShape decoder assumes \emph{perfect (noiseless) syndrome
information} and does not provide guarantees in the presence of syndrome noise; indeed, its
correctness can fail even under a small number of corrupted syndrome bits.

By contrast, the decoding framework developed in this paper explicitly handles
\emph{noisy syndromes} in an adversarial setting.
Our reductions yield quantum decoders with running time
$O(N^{3/2})$, matching the optimal scaling up to subpolynomial factors.
Our work can thus be viewed as extending ReShape-style reductions
to the noisy-syndrome regime while simultaneously improving the asymptotic
runtime.

\paragraph*{Expander-based and Viderman-style decoders}
In~\cite{wootters2023viderman}, a quantum analogue of Viderman's classical
expander decoding algorithm was proposed, reducing quantum decoding to a sequence of erasure decoding
problems and yielding an adversarial decoder for HGP codes with running time
$O(N^{3/2})$.
Similarly, Golowich and Guruswami~\cite{golowich2024decoding} reduced decoding of
HGP codes to noisy-syndrome decoding of classical codes derived from one-sided
vertex expanders, again achieving $O(N^{3/2})$ running time.

Our results differ in several key respects.
First, we provide a unified reduction framework that applies to both noisy and
noiseless syndrome settings and isolates \emph{stable noisy syndrome decoding}
as the fundamental classical primitive.

Second, the HGP codes considered in~\cite{golowich2024decoding} require that both $H_{*}$ and $H^{T}_{*}$ define classically noisy-syndrome decodable codes (for $*\in\{X,\, Z\}$), whereas the constructions in the present paper require noisy syndrome decodability only for $H_{*}$.

\paragraph*{Summary}
In summary, prior works either:
(i)~achieve linear-time decoding under stochastic noise assumptions~\cite{leverrier2015quantumexpander},
(ii)~provide general adversarial decoders assuming noiseless syndromes~\cite{reshape}, or
(iii)~handle adversarial noise with superlinear runtime~\cite{wootters2023viderman,golowich2024decoding}.
The present work complements these results by giving a general,
reduction-based approach to decoding and recovering from \emph{adversarially noisy
syndromes} in near-linear time, thereby addressing an open question posed
in~\cite{golowich2024decoding}.

\subsection{Open Problems}
We highlight a few questions that we find particularly compelling.

\begin{enumerate}
	\item \textbf{Tolerating more syndrome noise.}
	The syndrome error tolerance of our quantum decoders is ultimately bottlenecked
	by the noisy syndrome decoding capabilities of the underlying classical codes.
	A natural goal is to push this tolerance further.
	For polynomial codes such as Reed--Solomon codes, 
	the folding operation (to obtain folded Reed--Solomon codes) is known to dramatically improve the decoding radius in the curve-fitting and list decoding settings.
	Does a similar phenomenon hold for syndrome decoding?
	If so, this would directly translate into more robust hypergraph product codes via our reductions.

	\item \textbf{New code families.}
	Our main theorems reduce quantum noisy syndrome decoding of hypergraph
	product codes to classical noisy syndrome decoding of the constituent codes.
	This puts the spotlight squarely on the classical side: which other code
	families satisfy the assumptions of
	Theorems~\ref{thm:main-informal-noisy} and~\ref{thm:main-informal-nonnoisy}?
	For polynomial-based codes, a judicious choice of evaluation points seems
	to be the key ingredient. We wonder whether similar ideas can be made to
	work for multiplicity codes, Reed--Muller codes, or other algebraic and
	combinatorial families, while still ensuring efficient syndrome decoding for both the code and its dual.

	\item \textbf{Random noise.}
	Our guarantees are worst-case, handling fully adversarial data and syndrome
	errors.
	It is natural to ask what happens under random noise.
	Can our reduction framework be married with the probabilistic techniques
	used for quantum expander codes to obtain better decoding thresholds or
	average-case guarantees in the noisy-syndrome setting?

\end{enumerate}
\section{Preliminaries}

In this section, we will recall some essential preliminaries on quantum codes, and cover a few more in Appendix~\ref{app:additional-prelims}.

\paragraph*{Convention.}  In this work, we will assume that \(\FF_q\) is a finite field having characteristic 2, that is, \(q=2^r\) for some \(r\ge1\).  We will also assume that all our classical codes are linear over \(\FF_q\).  We will stick to these conventions throughout, without further mention.

\subsection{Classical error-correcting codes}

Let us briefly recall the important families of classical codes that we will consider in this work.

\paragraph*{The Sipser-Spielman expander code.}  Let \(\FF_q\) be a finite field, and \(G\) be a \((d_\ell,d_r)\)-biregular bipartite graph with bipartition \([n]\sqcup[d_\ell n/d_r]\). 
Let \(C\subseteq\mb{F}_q^{d_r}\) be a linear code.  
By interpreting the left set of vertices as the $n$ locations for codeword entries, and the right vertices as indexing $d_\ell n/d_r$ many constraints satisfied by the codeword entries,~\cite{SipserSpielman1996expander} defined a code
\[
\SSP(G,C)=\{y\in\FF_q^n:y(\Nbr(j))\in C\tx{ for all }j\in[d_\ell n/d_r]\},
\]
where we denote \(y(\Nbr(j))=(y_t:t\in\Nbr(j))\in\mb{F}_q^{d_r}\).  This code is now called the \emph{Sipser-Spielman code}.

 Further, for \(\gamma>0,\; \epsilon\in(0,1/2)\), we say the bipartite graph \(G\) for a $(d_{l},\; d_{r})$-biregular bipartite graph as defined above is a \emph{\((\gamma,d_{l}(1-\epsilon))\)-vertex expander} if we have \(|\Nbr(S)|\ge d_{l}(1-\epsilon)|S|\), for all \(S\subseteq[n]\) with \(|S|\le\gamma n\).   The \emph{Sipser-Spielman expander code} is the code \(C_G\) when \(G\) is an expander. The work \cite{SipserSpielman1996expander} showed that the Sipser-Spielman expander code \(C_G\) (with a random expander graph) will have distance $\geq \gamma n$ and  can correct $\le \gamma(1-2\epsilon)n$ errors in linear time when $\epsilon\in (0,1/4)$. Explicit expander graphs with the required parameters have since been constructed, for instance, by~\cite{golowich-lossless}.

\paragraph*{Reed-Solomon codes and variants.}  Let \(\gamma\in\F_q^\times\) have multiplicative order \(n\).  Then the \emph{Reed-Solomon (RS) code} with evaluation points \(\{1,\gamma,\ldots,\gamma^{n-1}\}\) is given by
\[
\RS_q(\gamma;n;k)=\left\{[f]=(f(1),f(\gamma),\ldots,f(\gamma^{n-1})):f(X)\in\F_q[X],\,\deg(f)<k\right\}.
\]
It follows by definition, and basic properties of polynomials, that the code \(\RS_q(\gamma;n;k)\) has dimension \(k\) and distance \(n-k+1\).

Of course, the Reed--Solomon codes are defined more generally for an arbitrary choice of evaluation points, but we will only consider the above specific choice of evaluation points.

\subsection{Quantum error-correcting codes}

We now discuss the essentials of hypergraph product codes.  More on the essentials of quantum codes are covered in Appendix~\ref{app:additional-prelims}.

\paragraph*{HGP codes.}  The \emph{hypergraph product code construction} can combine \emph{any} two classical linear codes (of \emph{any} lengths) in a way that directly yields a CSS code \cite{Tillich13}.
Specifically, if $H_1$ and $H_2$ are the parity check matrices for $C_1$ and $C_2$ and $H_{1}^\transpose$, $H_{2}^\transpose$ is the parity-check matrices for the transpose codes $C_{1}^\transpose$ and $C_{2}^\transpose$ respectively, then the \emph{hypergraph product code} $\HGP(H_1,H_2)$ is the CSS code with quantum parity check matrices $H_1'$ and $H_2'$ defined by
\[H_1' \coloneqq\begin{bmatrix}H_1 \otimes I & I \otimes H_2^\transpose\end{bmatrix}, \quad H_2' \coloneqq\begin{bmatrix}I \otimes H_2 & H_1^\transpose \otimes I\end{bmatrix}.\]
It is straightforward to verify that $H_1'(H_2')^\transpose=0$, and hence, this defines a valid CSS code.

Moreover, if $H_1$ and $H_2$ have dimensions $m_{1}\times n_{1}$ and $m_{2}\times n_{2}$ and $C_1$ and $C_2$ have parameters $[n_1,k_1,d_1]$ and $[n_2,k_2,d_2]$ and $C_{1}^\transpose$ and $C_{2}^\transpose$ have parameters $[m_{1},k_{1}^\transpose,d_{1}^\transpose]$ and $[m_{2},k_{2}^\transpose,d_{2}^\transpose]$ respectively, then \cite{Tillich13} show that $\HGP(H_1,H_2)$ is a code with parameters
\[ [[n_1n_2+m_{1}m_{2},k_1k_2+k_{1}^\transpose k_{2}^\transpose, \min\{d_1,d_2,d_{1}^\transpose,d_{2}^\transpose\}]], \]
where $k_{1}^\transpose,k_{2}^\transpose$, and  $d_{1}^\transpose,d_{2}^\transpose$ are the ranks and distances of the the codes with $H_{1}^\transpose$ and $H_{2}^\transpose$ as their respective parity check matrices.

\subsection{Some Helpful Claims}

We include a few easy claims that will be used repeatedly through our proofs. Their proofs appear in Appendix~\ref{app:helpful}.

For a binary code $C$ with parameters $[n,k,d]$ and parity-check matrix $H$ with dimension $(n-k)\times n$, one can obtain a self-orthogonal code, as given by the following claim:
\begin{claim}\label{claim:self-orthogonal}
    The code $\pi(C):=\{(c,c):c\in C\}$ has parameters $[2n,k,2d]$ and is self-orthogonal. Further, the matrix:
    \[H':=\begin{bmatrix}
    H & O_{(n-k)\times n}\\
    O_{(n-k)\times n} & H\\
    A & B
    \end{bmatrix}\] is a parity-check matrix for $\pi(C)$, for some matrix $[A:B]$ with $rank([A:B])=k$ .
\end{claim}

\begin{claim}\label{clm:symmetrization}
Let $[I_{n-k}:P_{(n-k)\times k}]$ and $[P^{T}_{k\times (n-k)}:I_{k}]$ be the systematic form of a parity-check matrix and the corresponding generator matrix for a binary linear code $C$ with parameters $[n,k,d]$. Consider the matrix $H':=\begin{bmatrix}
    I & P\\
    P^{t} & I
\end{bmatrix}.$ Then $Ker(H')=C'\cap {C^{'}}^{\perp},$ where $C':=Ker([I:P])$.
\end{claim}

\begin{claim}\label{systematic-rank}
    Let $H$ be a parity-check matrix for a code $C$ with parameters, $[n,k,d]$ and let $[I_{n-k}:P_{(n-k)\times k}]$ be its systematic form representation with $C':=Ker(P)$. If $rank(P)<k$ then $d(C')\geq d$.  
\end{claim}

To simplify our analysis for the HGP codes constructed, we consider parity-check matrices with a particular structure. Let $C$ be a $[n,k,d]$ code with the systematic form of the parity-check matrix being $[I_{n-k}:P_{(n-k)\times k}]$.

  In our proofs we will be using HGP codes, where it will be important to work with symmetric matrices. Furthermore, it will also allow us to reduce the syndrome decoding of HGP codes considered to that of the underlying classical codes. It turns out that considering the symmetric matrices $H'=\begin{bmatrix}
			O & P\\
			P^\transpose & O
		\end{bmatrix}$ will greatly simplify our proofs.

\begin{claim}\label{clm:distance-lem}
Let \(C\) be the $[n,k,d]$ code with parity check matrix \(H\).  Then the code defined by the parity check matrix \(H'\) has distance at least \(\min\{d(C),d(C^\perp)\}\).
\end{claim}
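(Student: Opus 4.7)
The plan is to identify the code $C' := \ker H'$ explicitly.  Writing a candidate codeword as $v = \begin{pmatrix} a \\ b \end{pmatrix}$ with $a \in \FF_q^{2n-k}$ and $b \in \FF_q^k$, the equation $H'v = 0$ decomposes into the top-block condition $a + Pb = 0$ and the bottom-block condition $P^\transpose a + b = 0$.  The first equation is precisely the defining parity-check equation for $C^{(2)}$, so it asserts that $v \in C^{(2)}$.  For the second, I would invoke the observation---already made in the proof of Theorem~\ref{thm:main-formal-noisy}---that $\begin{bmatrix} P^\transpose & I \end{bmatrix}$ is a generator matrix of $C^{(2)}$.  Consequently, $P^\transpose a + b = 0$ says that $v$ is orthogonal to a basis of $C^{(2)}$, equivalently $v \in (C^{(2)})^\perp$.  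Combining these, $C' = C^{(2)} \cap (C^{(2)})^\perp$.

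The next step is to invoke the self-orthogonality of $C^{(2)}$ in characteristic $2$, as verified in the proof of Theorem~\ref{thm:main-formal-noisy} via the computation $(c_1, c_1)(c_2, c_2)^\transpose = 2 c_1 c_2^\transpose = 0$.  This gives $C^{(2)} \subseteq (C^{(2)})^\perp$, so the intersection collapses and $C' = C^{(2)}$.

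Finally, every nonzero codeword of $C^{(2)}$ is of the form $(c, c)$ for some $c \in C \setminus \{0\}$, and hence has Hamming weight $2\wgt(c) \geq 2 d(C)$.  This yields
\[ d(C') = 2\, d(C) \;\geq\; d(C) \;\geq\; \min\{d(C), d(C^\perp)\}, \]
which is in fact slightly stronger than the stated bound.

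I do not anticipate a significant obstacle; the one subtlety worth flagging is that the systematic form $\begin{bmatrix} I & P \end{bmatrix}$ and the associated generator $\begin{bmatrix} P^\transpose & I \end{bmatrix}$ must be read in the \emph{same} coordinate ordering on $\FF_q^{2n}$ so that the inner product used to define $(C^{(2)})^\perp$ is consistent; this follows routinely from the systematic-form convention.  I also note that the hypothesis $k \neq n/2$ does not appear to enter into the above argument, so it may be an artifact of an alternative proof strategy that decomposes a kernel element into parts attributable to $C$ and to $C^\perp$, where avoiding a self-dual dimensional configuration could be useful.
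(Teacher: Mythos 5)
There is a genuine gap: your argument is about the wrong matrix $H'$. Claim~\ref{clm:distance-lem} is the distance lemma invoked by Theorem~\ref{thm:main-formal-nonnoisy}, where $\begin{bmatrix}I&P\end{bmatrix}$ is the systematic parity check matrix of $C$ itself (so $P$ is $(n-k)\times k$) and
\[
H'=\begin{bmatrix}O&P\\P^\transpose&O\end{bmatrix},
\]
with \emph{zero} diagonal blocks. You instead took the $H'$ of Theorem~\ref{thm:main-formal-noisy}, namely $\begin{bmatrix}I&P\\P^\transpose&I\end{bmatrix}$ with $P$ coming from $C^{(2)}$; that is why your top-block equation reads $a+Pb=0$ and your kernel comes out as $C^{(2)}\cap(C^{(2)})^\perp=C^{(2)}$. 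For the intended $H'$ the top block gives $Pb=0$ (no $a$ term), so the kernel is not $C^{(2)}$ at all: writing $x=(x'',x')$ with $x''\in\FF_q^{n-k}$, $x'\in\FF_q^{k}$, the conditions are $Px'=0$ and $P^\transpose x''=0$, i.e.\ the kernel is the direct sum of $\ker P$ (sitting in the last $k$ coordinates) and $\ker P^\transpose$ (sitting in the first $n-k$ coordinates). Your conclusion $d(\ker H')=2d(C)$ is then false in general: a nonzero $x''$ with $P^\transpose x''=0$ yields the codeword $(x'',0)$, which lies in $C^\perp$ (since $\begin{bmatrix}P^\transpose&I\end{bmatrix}$ generates $C$, hence is a parity check for\dots{} no, generates $C$? it generates $C$'s complement role: it is a generator matrix for $C$, equivalently a parity check matrix for $C^\perp$), and its weight can be as small as $d(C^\perp)$, which may be well below $2d(C)$. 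The two warning signs you noticed but set aside are exactly the symptoms of this mismatch: the hypothesis $k\neq n/2$ and the quantity $d(C^\perp)$ genuinely enter the correct argument and cannot be absent from it.

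The paper's proof is the block decomposition just described: any nonzero $x\in\ker H'$ has $x'\neq0$ or $x''\neq0$; if $x'\neq0$ then $(0,x')$ is a nonzero codeword of $C$ (because $H(0,x')^\transpose=Px'=0$), so $\wgt(x)\geq\wgt(x')\geq d(C)$; if $x''\neq0$ then $(x'',0)$ is a nonzero codeword of $C^\perp$ (because $\begin{bmatrix}P^\transpose&I\end{bmatrix}(x'',0)^\transpose=P^\transpose x''=0$ and $\begin{bmatrix}P^\transpose&I\end{bmatrix}$ is a parity check matrix for $C^\perp$), so $\wgt(x)\geq d(C^\perp)$; either way $\wgt(x)\geq\min\{d(C),d(C^\perp)\}$. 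The hypothesis $k\neq n/2$ is used to rule out the case $\rank(P)=k$ and $\rank(P^\transpose)=n-k$ simultaneously, i.e.\ to guarantee that the code defined by $H'$ is nontrivial. To repair your write-up you would need to redo the kernel identification for the zero-diagonal $H'$ along these lines; the self-orthogonality of $C^{(2)}$ plays no role here.
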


    \begin{claim}\label{clm:error-weight}
    Let $H\in\mathbb{F}_{q}^{m\times n}$ be a parity-check matrix and let $x=(x^{1},x^{2}, \ldots,x^{n})\in\mathbb{F}_{q}^{n^{2}}$ where $x^{i}\in\mathbb{F}_{q}^{n}$. If    $E_{m\cdot n \times 1}:=(H\otimes I_{n\times n})x$,
    then    $\wgt_\Hamm(E^{i})\leq \wgt_\Hamm(x)$,
    where 
    $$E=\begin{bmatrix}
        E^{1}_{n\times 1}\\
      \vdots\\
        E^{m}_{n\times 1}
    \end{bmatrix} \text{ and } E^{i}\in\mathbb{F}_{q}^{n\times 1} \text{ for } 1\leq i\leq m.$$
\end{claim}

We now reduce the noisy syndrome exact recovery problem to that of usual syndrome decoding in the absence of noise. We use the following  observation repeatedly:
\begin{claim}\label{noiseless syndrome reduction}
    Let $H\in\mathbb{F}_{2}^{(n-k)\times n}$  be the parity-check for a code $C$ and let $[I_{n-k}:P_{(n-k)\times k}]$ be its systematic form. Consider the matrix $H':=\begin{bmatrix}
        P_{(n-k)\times k} &O_{(n-k)\times k}\\
        O_{(n-k)\times k} & P_{(n-k)\times k}
    \end{bmatrix}.$    Then for an adversarial error, $E\in\mathbb{F}_{2}^{2(n-k)},$ and a data error $a\in\mathbb{F}_{2}^{2k}$ the equation for noisy syndrome exact recovery given by $s^{obs}+E=H^{'}a,$    can be reduced to noiseless syndrome decoding of $C$.
\end{claim}

\section{Exact Recovery for HGP codes}


In this section, we will prove the formal version of Theorem~\ref{thm:main-informal-noisy}. First, we formally restate the theorem statement.

	\begin{theorem}[Formal, Exact Recovery]\label{thm:main-formal-nonnoisy}
		Let \(\FF_q\) be a finite field of characteristic \(2\).
		Let \(C \subseteq \FF_q^{n}\) be an explicit linear code with parity check matrix $H=\begin{bmatrix}I_{n-k}&P_{(n-k) \times k}\end{bmatrix}$. 
        Further assume that both \(C\) and its dual \(C^\perp\) have parameters $[n,\Theta(n),\Theta(n)]$, and they both 
        admit \emph{exact} syndrome decoding
		from \(\Theta(n)\) errors in time \(T(n)\).

		Let
		$
		H'=\begin{bmatrix}
			O & P\\
			P^\transpose & O
		\end{bmatrix}
		$
		be defined from the systematic parity-check matrix $H$ of \(C\).
		Then the hypergraph product code \(\HGP(H',H')\) has parameters
		$
		[[\Theta(n^2),\Theta(n^2),\Theta(n)]].
		$

		Moreover, \(\HGP(H',H')\) admits $(t=O(n), \eta=O(n))$-\emph{exact recovery}
		for any Pauli error and syndrome noise of total weight \(O(n)\), in time \(O(nT(n))\).
	\end{theorem}

Using Claim~\ref{clm:distance-lem} we have the following proof for Theorem~\ref{thm:main-formal-nonnoisy}.
\begin{proof}[Proof of Theorem~\ref{thm:main-formal-nonnoisy}]
    Suppose \(C\) has parameters \([n,k,d]\).  So we have
    \[
    H=\begin{bmatrix}I_{n-k}&P_{(n-k)\times k}\end{bmatrix},\quad\tx{and}\quad H'=\begin{bmatrix}O_{(n-k)\times (n-k)} & P_{(n-k)\times k}\\
    P^\transpose_{k\times(n-k)} & O_{k\times k}
    \end{bmatrix}.
    \]
Note that since $H'$ is a symmetric $n \times n$ matrix, and $k = \Theta(n)$, it follows that the HGP construction, $\HGP(H', H')$ gives a code with parameters $[[\Theta(n^2),\Theta(n^2),\Theta(n)]]$. 

The noisy-syndrome equation for the aforementioned Hypergraph Product code can be then written as:
$$\vs_{\obs}+\vE=(H'\otimes I)\vx + (I\otimes H')\vy,$$
where $\vs_{\obs} \in \mb{F}_q^{n^2}$ is the observed (noisy) syndrome with error 
$\vE \in \mb{F}_q^{n^2}$ and $(\vx,\; \vy)\in \mb{F}_q^{2n^2}$ represent the $X$ ( or $Z$) data errors.

As the base field is of characteristic $2$, the above equation can be rephrased as
$$\vs_{\obs}+\vE+(H'\otimes I)\vx=(I\otimes H')\vy.$$

Substituting $E\coloneqq \vE + (H'\otimes I)\vx \in \mb{F}_q^{n^{2}}$ and by the Kronecker definition of tensor product, we can split the $n^{2}$ different syndrome equations into $n$ blocks of length $n$ each, giving: 
$$(\vs_{\obs})^{(i)}+E^{(i)}=H'\vy^{(i)} \qquad \text{ for } i \in [n],$$
where $E=(E^{1}, \ldots, E^{n})$ with $E^{i}\in\mathbb{F}_{q}^{n}$ and $\vs_{obs}=((\vs_{obs})^{(1)}, \ldots,(\vs_{obs})^{(n)})$ with $(\vs_{obs})^{(i)}\in\mathbb{F}_{q}^{n}$.

We now consider the following equivalent formulation of the above system of equations for $H'$.
$$(\vs_{\obs})^{(i)}=
\begin{bmatrix}I_{n\times n}&H'\end{bmatrix}
\begin{bmatrix}E^{(i)}\\ \vy ^{(i)}\end{bmatrix},$$
for $1\leq i\leq n.$
Substituting the value of $H'$, we get
$$(\vs_{\obs})^{(i)}=\begin{bmatrix}I_{(n-k)} &O _{(n-k)\times k}& : & O_{(n-k)\times (n-k)} & P\\
O_{k\times (n-k)} & I_{k\times k}& : &   P^\transpose & O_{k\times k}
\end{bmatrix}\begin{bmatrix}
    E^{(i)}\\
    \vy^{(i)}
\end{bmatrix}.$$

It is not difficult to see that the above system of equation can be rewritten as
$$(\vs'_{\obs})^{(i)}=
\begin{bmatrix} I&P \end{bmatrix}
\begin{bmatrix}{(E')}^{(i)}\\ {(\vy')}^{(i)}\end{bmatrix},
\quad\tx{where}\quad 
{(E')}^{(i)}\coloneqq\begin{bmatrix}{E_{1}^{(i)}}\\ E_{2}^{(i)}\\ \vdots \\E_{(n-k)}^{(i)}\end{bmatrix},\quad\tx{and}\quad
{(\vy')}^{(i)}\coloneqq\begin{bmatrix}\vy^{(i)}_{n-k+1}\\ \vdots \\ \vy^{(i)}_{n} \end{bmatrix}.$$
and,
$$({\vs''}_{\obs})^{(i)}=
\begin{bmatrix}
    P^\transpose&I
\end{bmatrix}
\begin{bmatrix}{(\vy'')}^{(i)}\\ {(E'')}^{(i)}
\end{bmatrix},
\quad\tx{where}\quad 
{(E'')}^{(i)}=\begin{bmatrix}
    E^{(i)}_{n-k+1}\\
    \vdots\\
    E^{(i)}_{n}
\end{bmatrix}\quad\tx{and}\quad 
{(\vy'')}^{(i)}=\begin{bmatrix}
    \vy^{(i)}_{1}\\
    \vdots\\
    \vy^{(i)}_{n-k}
\end{bmatrix}.$$
The two aforementioned equations correspond to the syndrome decoding for \(C\) and \(C^\perp\) respectively and hence using the fact that $C$ and $C^{\perp}$ have syndrome decoders running in time $T(n)$, we can recover both $E$, and $\vy$. 

Now, since $E = \vE + (H' \otimes I)\vx$, we obtain a similar reduction to $n$ instances of classical syndrome decoding for $C$ and $C^\perp$, which can be solved to recover $\vx$  and $\vE$ exactly in total time $O(nT(n))$. 
\end{proof}

\subsection{Instantiation with polynomial codes}
We now instantiate Theorem~\ref{thm:main-formal-nonnoisy} with Reed--Solomon codes that are known to posses syndrome decoders that can correct from noisy syndromes. 

\begin{corollary}\label{cor:RS}
    Let $\RS_q(\gamma, n,\Theta(n))$ be the Reed-Solomon code over a finite field of characteristic $2$, with evaluation points $\{1, \gamma, \ldots, \gamma^{n-1}\}$ for some $\gamma \in \F_q^{\times}$ of order $n$. Let $H = \begin{bmatrix}
        I&P
    \end{bmatrix}$ be its parity check matrix, and define by $H'= \begin{bmatrix}
        O & P\\
        P^\transpose & O
    \end{bmatrix}$.
    Then the HGP code $\HGP(H', H')$ is a quantum code with parameters $[[\Theta(n^2),\Theta(n^2),\Theta(n)]]$ that admits $(t=O(n), \eta=O(n))$-exact recovery for any Pauli error and syndrome error of total noise weight $O(n)$ in time $O(n^3)$.


\end{corollary}
\noindent The proof of Corollary~\ref{cor:RS} follows directly from Theorem~\ref{thm:main-formal-nonnoisy} and the syndrome decoder of \cite[Theorem 2 and Algorithm 2]{SSB07} that corrects  $\lfloor \frac{n-k}{2} \rfloor$ errors in time $O(n^2))$ for any $k = \Theta(n)$.

\section{HGP Codes from Vertex and Spectral Expanders}
In this section we provide two constructions of HGP codes 1) that are stable noisy syndrome decodable and 2) that admit exact recovery. Both these construction are derived from classical expander codes. 

\subsection{Stable Noisy Syndrome Decoding for HGP codes}
We first start with the stable noisy syndrome decodable HGP codes derived from Sipser-Spielman codes. We start by recalling the (classical) stable noisy syndrome decoding guarantees of   Sipser-Spielman~\cite{spielman1995linear} codes that will be crucially invoked by the HGP decoders constructed in this section. Note that \cite{spielman1995linear} had called them error-reduction codes, and we use the same terminology.

\begin{lemma}[Sipser--Spielman Stable Noisy Syndrome Decoding \cite{spielman1995linear}]
\label{lem:ss-error-reduction}
Let $B$ be a $(d,2d)-$ biregular biparite graph with $(\delta,\tfrac{3}{4}d+2)$ vertex expansion, and let $R(B)$ be the associated
error-reduction code with $n$ message bits and $n/2$ check bits.
There exists a linear-time algorithm with the following properties:

\begin{enumerate}
    \item[(a)] (\emph{Error reduction})  
    Given a word that differs from a codeword in $v$ message bits and $t$ check bits,
    where $v,t \le \delta n/2$, the algorithm outputs a word that differs from the
    codeword in at most
    $
        \max\{v/2,\; t/2\}
    $
    message bits.

    \item[(b)] (\emph{Error correction with clean checks})  
    If the input word differs from a codeword in at most $v \le \delta n/2$ message bits
    and in no check bits ($t=0$), then the algorithm outputs the original codeword.
\end{enumerate}
\end{lemma}
\begin{remark}
    In the original work~\cite{spielman1995linear}, the expander graph considered was a random construction, but we now know explicit expander graphs from several works.  We instantiate the specific choice of bipartite lossless expander in~\cite{golowich-lossless}, which allows \((\delta,(1-\eps)d)\)-vertex expansion for \(\eps\in(0,1/4)\) chosen so that \((1-\eps)d\ge\frac{3}{4}d+2\), and \(\delta=2^{-10}\).
\end{remark}

\subsubsection{The Construction}
Consider $H_{1}:=\begin{pmatrix}
    H & 0\\
    0& H
\end{pmatrix},$ where $H$ corresponds to a $(d_{L}, d_{R})$-biregular bipartite graph and is a $(\delta, d_{L}(1-\epsilon))$ vertex expander and the bipartite graph defined by the matrix $H^{t}$ is a $(\delta',d_{R}(1-\epsilon'))$ vertex expander where $\delta,\; \delta'>0$ and $\epsilon,\;  \epsilon'\in(0,1/4)$  respectively. Let $H':=[I_{n-k}:P_{(n-k)\times k}]$ be the systematic form of the parity-check matrix of a code satisfying the conditions in Theorem \ref{thm:main-formal-nonnoisy}. Consider the hypergraph product code $HGP(H_{1},H_{2})$ where $H_{2}=\begin{bmatrix}
    0 & P\\
      P^{t} & 0
\end{bmatrix}.$ This gives the corresponding noisy syndrome equations as:
\[s_{X}+E_{X}=(H_{1}\otimes I)a_{1}+(I\otimes H_{2}^{t})a_{2},\quad\text{and}\quad s_{Z}+E_{Z}=(I\otimes H^{t}_{1})b_{1}+(H_{2}\otimes I)b_{2},\]
where $a_{1},\; a_{2},\; b_{1},\; b_{2}$ are the data errors, $E_{X},\; E_{Z}$ are the syndrome errors and $s_{X},\; s_{Z}$ are the observed syndrome vectors.

\begin{theorem}
The hypergraph product code $HGP(H_{1},H_{2})$ is a $(N,\Theta(N),\Theta(\sqrt{N}))$ quantum code that admits a  $(\Theta(\sqrt{N}),\Theta(\sqrt{N}),1/2)$-stable noisy syndrome decoding in $\Theta(N^{3/2})$ time.

\end{theorem}
\begin{proof}
    We discuss for $X-$errors, the argument for $Z$-errors is symmetric.
    Consider the substitution $E'_{X}:=E_{X}+(H_{1}\otimes I)a_{1}$. This gives the equations: $s_{X}+E'_{X}=(I\otimes H_{2}^{t})a_{2}.$ These equations can be further simplified using the Kronecker product decomposition to,
    \[s^{(i)}_{X}+{E'}^{(i)}_{X}=H_{2}^{t}a_{2}^{(i)} \qquad \text{ for } i \in [n],\quad\text{where}\quad\beta^{(i)}=\begin{pmatrix}
        \beta^{(n\cdot(i-1)+1)}\\
       \vdots\\
        \beta^{(n\cdot i)}
        \end{pmatrix}\quad\text{for }\beta\in\{s_{X},E_{X}',a_{2}\}.\]


        Using Claim \ref{noiseless syndrome reduction} and by Theorem \ref{thm:main-formal-nonnoisy}, we are able to recover ${E'}_{X}^{(i)}$ for $\wgt_\Hamm(a_{1})$, $\wgt_\Hamm(a_{2})$, $\wgt_\Hamm(E_{X})=\Theta(n)$. This gives the system of equations:
        \[{E'_{X}}=E_{X}+(H_{1}\otimes I)a_{1},\]
        where $E^{'}_{X}$ is known.
        This is exactly the noisy-syndrome decoding problem for Sipser-Spielman codes to a permutation of coordinates, and hence we can apply the error-reduction algorithm for Sipser-Spielman codes as mentioned in \cite{spielman1995linear}.
\end{proof}
In the following section, we look at noisy syndrome exact-recovery for HGP codes obtained from vertex and spectral expanders.

\subsection{Exact Recovery for HGP Codes}

In this section we look at two constructions for obtaining HGP codes with noisy syndrome exact recovery. For the first construction we first consider expander codes obtained from bipartite vertex expanders and then use the distance balancing technique \cite{evra2022decodable}. For the second construction we are able to obtain codes with good $d_{X},\; d_{Z}$ right off the bat.
\subsubsection{Construction I}
Using the construction mentioned in this section, we can instantiate with Expander Codes. In particular, we need the noisy-syndrome correction for codes derived from  bipartite vertex expanders on biregular graphs. 

Let $H$ be an adjacency matrix for a $(c,d)-$ biregular $(\delta,c(1-\epsilon))$ bipartite vertex expander with $\epsilon\in(0,1/4)$. Let the systematic form representation for $H$ be $[I_{n-k}:P_{(n-k)\times k}]$. Now, consider the matrix:
\[H'=\begin{bmatrix}
    P & O_{(n-k)\times k}\\
    O_{(n-k)\times k} & P
\end{bmatrix}\]

Consider the hypergraph product code $HGP(H',{H'}^{t})$.


\begin{theorem}\label{thm:ldpc-dist}
    $HGP(H',{H'}^{t})$ is a quantum code with parameters$$[[N,\Theta(N), d_{X}=\Theta(\sqrt{N}), d_{Z}=O(1)]]$$ and has a decoder that runs in $O(N^{3/2})-$time and corrects  $\leq \frac{\delta}{2} (1-2\epsilon)n$ X-data errors and $
    \leq \frac\delta2 (1-2\epsilon)n$ X-syndrome errors.
\end{theorem}
\begin{proof}
The hypergraph product code $HGP(H',{H'}^{T})$ admits the following (noisy-) syndrome equations:
\[\vs^{obs}+\vE=(I\otimes H')\vx+(H'\otimes I)\vy\]

Taking $E:=\vE+(H'\otimes I)\vy$, we obtain:
\[s^{obs}+E=(I\otimes H')x\]
Now, by Claim \ref{noiseless syndrome reduction}, we have that every $(x,y)\in\mathbb{F}_{2}^{2n}$ and $E\in\mathbb{F}_{2}^{n}$ with $\wgt_\Hamm(x,y)\leq \delta(1-2\epsilon)n/2$ and $\wgt_\Hamm(E)\leq \delta(1-2\epsilon)n/2$ can be corrected for.

For $Z-$ errors we have a similar argument, but as the rows of $H$ are sparse with $O(1)$ weight, implying that the distance of $C^{\perp}=O(1)$, we can only correct for data errors and syndrome errors with constant weight.
\end{proof}
We can improve the $Z-$ error correction of the code by distance balancing as in \cite{evra2022decodable}.

\begin{proposition}[follows from Theorem 2.3 of \cite{evra2022decodable} and Theorem \ref{thm:ldpc-dist}]
   Given a quantum CSS code with parameters $[[N,\Theta(N),d_{X}=\Theta(\sqrt{N}),d_{Z}]]$ and a repetition code of length $l$ there exists a quantum CSS code that has parameters $[[\Theta(Nl),\Theta(N),d^{'}_{X}=\Theta{(\sqrt{N})},d^{'}_{Z}\geq ld_{Z}]]$
\end{proposition}

Now setting $l = \Theta(\sqrt{N})$ yields a CSS code of length $\Theta(N^{3/2})$ whose distances satisfy $d_{X} =\Theta(\sqrt{N}),\; d_{Z} = \Theta(\sqrt{N})$. The natural question is whether this code admits efficient decoding against data and syndrome errors of weight $\Theta(\sqrt{N})$. 




\subsubsection{Construction II}
\begin{theorem}\label{thm:expandernoisyrecovery}
    Let $H$ be a parity-check matrix with the corresponding code $C$ having parameters $[n,k,d]$. If $C$ has a syndrome decoding algorithm that runs in time $\Theta(n)$ then there exists an $H'$ such that $HGP(H',H')$ has a noisy syndrome exact recovery algorithm that runs in time $\Theta(N^{3/2})$.
\end{theorem}
\begin{proof}
Let $H$ be a parity-check matrix for the code, $C$ and let $[I_{n-k}:P_{(n-k)\times k}]$ be the systematic form for $H$. Let $[I_{k'} :P_{k' \times (k-k')}']$ be the systematic form for the matrix $P$ for some $k' := rank(P)$. 

Now consider the $(2k \times 2k)$ matrix $$H':=\begin{bmatrix}
    I_{k-k'} & P''\\
    P''^{(t)} & I_{k-k'}
\end{bmatrix},$$ where $P_{(k-k') \times (k+k')}'':=\begin{bmatrix} P'^{(t)}_{(k-k') \times k'} & I_{(k-k')}  &  P'^{(t)}_{(k-k') \times k'}\end{bmatrix}$ and $P'^{(t)}$ and $P''^{(t)}$ are the transpose matrices of $P'$ and $P''$ respectively. 

We show that the HGP code,  $HGP(H',H')$ is noisy syndrome exact recoverable. First, let us compute the parameters of the $HGP(H',H')$ assuming $rank(P) < k$.

Consider $C'$ to be the code defined by the generator matrix $[I:P'^{(t)}]$ and let $\pi(C')=\{(c',c'):c'\in C'\}$. By Claim \ref{clm:symmetrization}, we have $Ker(H')=\pi(C')\cap\pi(C')^{\perp}=\pi(C')$.
This also implies that the $Rowspan([I:P''])\subseteq Rowspan([P''^{(t)}:I])$. This implies, by performing elementary row operations and column swaps (or, equivalently, relabelling of variables) and by Claim \ref{claim:self-orthogonal}, the matrix $H'$ can be rewritten as $\begin{bmatrix}
        P & O\\
        O & P\\
        A' & B'
    \end{bmatrix}$.
Finally, invoking the Rank-Nullity Theorem and Claim \ref{systematic-rank}, we obtain $dim(Ker(H'))=n-k-rank(P)$. For the distance of the code $\pi(C')$, observe that $C'=Ker([P':I])$ and as $[I:P']$ is the systematic form for $P$ hence $d(C')>d$ and thus  $d(\pi(C'))\geq 2d$. 
    
Therefore, we get gives us an HGP code with parameters $[[\Theta(n^{2}),\Theta((n-k-k')^{2}),\geq 2d]]$. 

Now, we show the exact recovery algorithm given a noisy syndrome. 
The corresponding noisy syndrome equations for $HGP(H', H')$ for the $X$-errors (similarly for $Z$-errors) would be $\vs+\vE=(I\otimes H')\vx+(H'\otimes I)\vy,$ where $\vs \in \F_{q}^{4k^2}$ is the observed noisy syndrome with syndrome noise $\vE$ and 
$\begin{bmatrix}
    \vx\\
    \vy
\end{bmatrix}$ 
is the $X$ (or $Z$) data error.
Let us consider 
$E := \vE +(H'\otimes I)\vy$, and using the Kronecker product structure, 
we get 
This gives us,
\[\vs^{(i)}+E^{(i)}=H'\vx^{(i)} \qquad \text{ for } i \in [2k].\]
Applying an appropriate sequence of elementary row operations 
along with column swaps (which can also be visualized as relabelling of variables), combined with the Claims \ref{claim:self-orthogonal} and \ref{clm:error-weight}, one can obtain the following system of equations: $\vs'^{(i)}=H\begin{bmatrix}
    {E'}^{(i)}\\
    \vx^{(i)},
\end{bmatrix}$ where $\vs'^{(i)}$, and ${E'}^{(i)}$ are obtained from $\vs^{(i)}$ and ${E}^{(i)}$.

Now by performing syndrome decoding for $H$, we can recover ${E'}^{(i)}$ and as multiplication by elementary matrices are invertible,  we can recover $E^{(i)}$ and hence $E$. The running time is $\Theta(n^3)$ owing to the runtime of Gaussian elimination and matrix multiplication.
\end{proof}

\begin{corollary}\label{cor: ss-instatiation}
    Let $H=(L\cup R,E)$ be a $(c,d)-$biregular bipartite graph with $(\delta,c(1-\epsilon))$ vertex expansion with $\epsilon\in(0,1/4)$. Then $HGP(H',H')$ as defined in Theorem \ref{thm:expandernoisyrecovery}, has a $(\Theta(\sqrt{N}),\Theta(\sqrt{N}))-$  noisy syndrome exact recovery algorithm that runs in time $\Theta(N^{3/2})$.
\end{corollary}
\begin{proof}
    Follows from Theorem \ref{thm:expandernoisyrecovery}.
    We instantiate the specific choice of bipartite lossless expander in~\cite{golowich-lossless}, which allows \((\delta,c(1-\eps))\)-vertex expansion as in the assumption, with \(\delta=\Theta(\eps^5)\).
\end{proof}

\begin{corollary}\label{cor:zemor-syndrome-decoder}
Let $H$ be the parity-check matrix for the Tanner Code $C(G,C_{0})$, with $G$ being a d-regular $\lambda-$ spectral expander and $C_{0}$ being a $[d_{0},R_{0}d,\delta_{0}d]$ with $R_{0}> 1/2$ and $\delta_{0}> \lambda$. Then $HGP(H',H')$ as defined in Theorem \ref{thm:expandernoisyrecovery}, has a $(\Theta(\sqrt{N}),\Theta(\sqrt{N}))-$ noisy syndrome exact recovery algorithm that runs in time $\Theta(N^{3/2})$.
\begin{proof}
    Follows from Theorem \ref{thm:expandernoisyrecovery} and correctness of Algorithm \ref{algo:syndrome-zemor}.
\end{proof}
\end{corollary}
\section{Acknowledgements}
Vatsal thanks Prahladh Harsha for helpful conversations. The authors thank several anonymous reviewers for their insightful feedback, which  helped improve the presentation of the paper.
\bibliographystyle{alpha}
\bibliography{qlocal}

\appendix
\section{Appendix}

\subsection{Additional Preliminaries}\label{app:additional-prelims}

Here we cover the preliminaries that were omitted in the main body.

\subsubsection*{Classical error-correcting codes}

Let us acquaint ourselves with the essentials on classical codes.  Let \(\mb{F}_q\) be a finite field.  A \emph{(classical) code} $C$ of \emph{length} $n$ is any subset $C \subseteq \FF_q^n$, where we consider elements of the $n$-dimensional vector space \(\FF_q^n\) as row vectors of length $n$.  If $C \subseteq\FF_q^n$ is a linear subspace, then we call $C$ a \emph{linear} code.
The two most important parameters of a classical linear code are its \emph{dimension} $k(C) \coloneqq\dim_{\FF_q} C$ and its \emph{(minimum) distance}

\[d(C) \coloneqq \min\{d_H(x,y) : x,y\in C, x\neq y\} = \min\{\wgt_\Hamm(x) : x \in C,x\neq 0\},\]
where $d_H(x,y) \coloneqq |\{ i\in[n] : x_{i}\neq y_{i}\}|$
is the Hamming distance between vectors in $\FF_q^n$, and \(\wgt_\Hamm(x)\coloneqq|\{i\in[n]:x_i\ne0\}|\) is the Hamming weight of a vector $x\in\mathbb{F}_{q}^{n}$.
By a classical \([n,k,d]_q\) linear code, we mean a linear code having length \(n\), dimension \(k\), and distance \(d\), over the finite field \(\FF_q\).

There are two standard ways to present a classical linear code: with a \emph{generator matrix} $G$, or with a \emph{parity check matrix} $H$.
The former is any full-rank \(\FF_q\)-matrix whose row space is equal to $C$,
while the latter is any full-rank \(\FF_q\)-matrix whose kernel is equal to $C$.  More precisely, for a classical linear \([n,k,d]_q\) code, a generator matrix \(G\) is a \(k\times n\) matrix, and a parity check matrix \(H\) is a \((n-k)\times n\) matrix  characterized by the conditions
\[ C =\{xG\in\FF_q^n : x\in\FF_q^k\}= \{ c \in \FF_q^n : Hc^\transpose = 0\}.\]
By Gaussian elimination, we can obtain a parity check matrix from a generator matrix, and \emph{vice versa}, in polynomial time. A natural way of specifying a parity check matrix $H$ is via the systematic form where $H=\begin{bmatrix}I_{n-k}&P_{(n-k)\times k}\end{bmatrix}$, and then $G=\begin{bmatrix}P^\transpose_{k\times (n-k)}&I_{k}\end{bmatrix}$ will be a generator matrix for $C$.

Given a classical linear code $C$ of length $n$, the \emph{dual code} $C^\perp$ is another length $n$ code defined by
\[C^\perp \coloneqq \{y\in\FF_q^n: yc^\transpose=0\text{ for all } c\in C \}.\]
It follows by definitions that if $H$ is a parity check matrix and $G$ is a generator matrix for a linear code $C$, then $HG^\transpose=0$.  In other words, every parity check (resp. generator) matrix for \(C\) is a generator (resp. parity check) matrix for \(C^\perp\).

\subsubsection*{Quantum error-correcting codes}

We cover the essentials of quantum codes along with the stabilizer formalism for quantum codes introduced in \cite{Gottesman1997stabilizer} and \cite{@calderbank1998gf(4)}.

\paragraph*{Binary setting for quantum codes}

In the classical setting (for linear codes), codewords are composed of \emph{bits} that have \emph{values} in a \emph{field}.  In the quantum setting, codewords are composed of \emph{qubits} that have \emph{states} in a \emph{Hilbert space}.  The single qubit state space is the two-dimensional Hilbert space $\mathbb{C}^{(2)}$, and the $n$-qubit state space is the $n$-fold tensor product $(\mathbb{C}^{(2)})^{\otimes n}$.  For a single qubit, we have the fundamental states
\[
\ket{0}\coloneqq\begin{bmatrix}
    1\\0
\end{bmatrix}\quad\tx{and}\quad\ket{1}\coloneqq\begin{bmatrix}
    0\\1
\end{bmatrix},
\]
and every other state can then be represented by
\[
\ket{\psi}=\alpha\ket{0}+\beta\ket{1},\quad\tx{where }|\alpha|^2+|\beta|^2=1.
\]
For an \(n\)-qubit, by the tensor product structure of the state space, we have the fundamental states
\[
\ket{x}\coloneqq\ket{x_1}\otimes\cdots\otimes\ket{x_n},\quad\tx{ for each }x=(x_1,\ldots,x_n)\in\{0,1\}^n,
\]
and every other state can then be represented by
\[
\ket{\psi}=\sum_{x\in\{0,1\}^n}\lambda_x\ket{x},\quad\tx{where }\sum_{x\in\{0,1\}^n}|\lambda_x|^2=1.
\]

In general, a {\em quantum code of length $n$} is any $\CC$-linear subspace $\calC$ of the Hilbert space of $n$ qubits $(\CC^2)^{\otimes n}$.
Similar to the classical setting, the \emph{(minimum) distance} $d_Q$ of a quantum error-correcting code is defined to be the minimum number of qubits where non-trivial errors must occur in order to effect a non-trivial logical error on the code-space.
We call a quantum code of length $n$, with $\dim_\CC \calC = K$ and distance $d$ a $((n,K,d))$ quantum code.
If $K=2^k$ happens to be a power of 2, then we call $k$ the number of \emph{logical qubits} in the code, and call the code a $[[n,k,d]]$ quantum code.

\paragraph*{Stabilizer codes.}  We will be interested in quantum codes with more structure.  The \emph{Pauli group} on \(n\)-qubits is defined to be the group \(\mc{P}_n\) of linear operators \((\mb{C}^2)^{\otimes n}\to(\mb{C}^2)^{\otimes n}\) generated by elements \(M_1\otimes\cdots\otimes M_n\), where each \(M_i\in\{I,X,Y,Z\}\) with
\[
I\coloneqq\begin{bmatrix}
    1&0\\0&1
\end{bmatrix},\quad X\coloneqq\begin{bmatrix}
    0&1\\1&0
\end{bmatrix},\quad Y\coloneqq\begin{bmatrix}
    0&-i\\i&0
\end{bmatrix},\quad\tx{and}\quad Z\coloneqq\begin{bmatrix}
    1&0\\0&-1
\end{bmatrix}.
\]
For any \(a,b\in\mb{F}_2^n\), denote \(X(a)Z(b)=\bigotimes_{i=1}^{(n)}X^{a_i}Z^{b_i}\).  Since we have the relation \(Y=iXZ\), it follows that we have
\begin{align*}
\mc{P}_n&=\{i^\lambda M_1\otimes\cdots\otimes M_n:\lambda\in\{0,1,2,3\},\,M_i\in\{I,X,Y,Z\}\tx{ for all }i\in[n]\}\\
&=\{i^\lambda X(a)Z(b):\lambda\in\{0,1,2,3\},\,a,b\in\mb{F}_2^n\}.
\end{align*}
 A \emph{stabilizer subgroup} is defined to be any Abelian subgroup $\mathcal{S}$, of $\mathcal{P}_{n}$ not containing $-I$.  If \(\mc{S}\) is a stabilizer subgroup of $\mathcal{P}_{n}$, the \emph{stabilizer code} $\mc{C}(\mathcal{S})$ associated to it is defined to be the joint $(+1)$-eigenspace of the operators in $S$, that is,
\[
\mc{C}(\mathcal{S})=\{\ket{\psi}:\; g\ket{\psi}=\ket{\psi} \text{ for all }g\in\mc{S}\}.
\]
Moreover, the stabilizer code $\mc{C}(\mc{S})$ is said to encode \emph{$k$-logical qubits} if $dim(\mc{C}(\mathcal{S}))=2^{k}$.  It follows from basic group theory that if \(\mc{S}\) has order $2^{k}$, then the code \(\mc{C}(\mc{S})\) has dimension $2^{n-k}$.

\paragraph*{CSS codes.}  The seminal works \cite{calderbankshor98} and \cite{Steane96} showed how to build certain quantum error-correcting codes---now called \emph{CSS codes}--- using any pair of classical binary linear codes $(C_1,C_2)$ with $C_2^\perp \subseteq C_1$.
If $C_1$ and $C_2$ have parameters $[n,k_1,d_1]$ and $[n,k_2,d_2]$, then the CSS code $CSS(C_1, C_2)$ has parameters $[[n, k_1+k_2-n, d_Q]]$ where
\[ d_Q \coloneqq \min\{\wgt_\Hamm(a):a\in (C_1\setminus C_2^{\perp})\cup (C_2\setminus C_1^\perp )\} \]
is the (quantum) distance of $CSS(C_1, C_2)$.
If the parity check matrices of $C_1$ and $C_2$ are $H_1$ and $H_2$, then the condition that $C_2^\perp \subseteq C_1$ is equivalent to the condition $H_1 H_2^\transpose=O$, and we will write $CSS(H_1,H_2)$ to mean $CSS(C_1,C_2)$, and may refer to $H_1$ and $H_2$ as the \emph{quantum parity check matrices} of the CSS code.

\paragraph*{Non-binary setting for quantum codes.}

In this case, we consider \emph{qudits} instead of qubits.  The single qudit state space is the \(q\)-dimensional Hilbert space $\mathbb{C}^q$, and the $n$-qudit state space is the $n$-fold tensor product $(\mathbb{C}^q)^{\otimes n}$.  For a single qudit, we have the fundamental states
\[
\ket{0}\coloneqq\begin{bmatrix}
    1\\0\\\vdots\\0
\end{bmatrix},\quad\ket{1}\coloneqq\begin{bmatrix}
    0\\1\\\vdots\\0
\end{bmatrix},\quad\ldots,\quad\ket{q-1}\coloneqq\begin{bmatrix}
    0\\0\\\vdots\\1
\end{bmatrix}.
\]
and every other state can then be represented by
\[
\ket{\psi}=\sum_{i\in[0,q-1]}\alpha_i\ket{i},\quad\tx{where }\sum_{i\in[0,q-1]}|\alpha_i|^2=1.
\]
The further notions are simple-minded generalizations of what we saw in the \(q=2\) case, and we do not dwell on them here.

\subsection{Proofs of some  Helpful Claims}\label{app:helpful}

\begin{proof}[Proof of Claim \ref{claim:self-orthogonal}]
 The code $\pi(C)$ having parameters $[2n,k,2d]$ is direct and follows from the observation that the map $\pi:\mathbb{F}_{2}^{n}\rightarrow\mathbb{F}_{2}^{2n}$
 is an injective map. Now, to prove that $\pi(C)$ is self-orthogonal, we use the observation that \[<\pi(c),\pi(c')>=<(c,c),(c',c')>=<c,c'>+<c,c'>=0,\]
 as we are performing arithmetic over $\mathbb{F}_{2}$. 

 Now, for the parity-check matrix for the code, $\pi(C),$ observe that \[G':=[G:G],\] is a generator matrix for $\pi(C)$, where $G$ being a generator matrix for $C$. What one could now observe is the following: 
 \[[G:G]\begin{bmatrix}
     H^{T}\\
     O_{n\times (n-k)}
 \end{bmatrix}=O_{k\times (n-k)}.\]
 and similarly for $[O_{(n-k)\times n}:H]$. Now, using the fact:
 \[dim(\pi(C)^{\perp})=2n-dim(\pi(C))=2n-k,\]
 and the observation that the rows of $[H:O_{(n-k)\times n}]$ and $[O_{(n-k)\times n}:H]$ are linearly independent with one another, hence $rank([A:B])=(2n-k)-((n-k)+(n-k))=k.$
\end{proof}

\begin{proof}[Proof of Claim \ref{clm:symmetrization}]
    Consider a vector $x\in\mathbb{F}_{2}^{n}$ such that:
    \[H'x=O_{n\times 1}.\]

    The above equation can be further broken down into :
    \[[I:P]x=O_{(n-k)\times 1}\]
    and,
    \[[P^{(t)}:I]x=O_{k\times 1}.\]
    This implies $x\in C'$ and $x\in {C^{'}}^{\perp}$ or equivalently $x\in C'\cap {C'}^{\perp}$. 
\end{proof}

\begin{proof}[Proof of Claim \ref{systematic-rank}]
    Let $rank(P) < k$, and for the sake of contradiction, let us assume that $d(C')<d$. 
    As $rank(P)<k$, there exists a $x\in C':=Ker(P)$ such that $x\neq 0$ and $\wgt_\Hamm(x)=d(C')$. 
    Now, consider the vector $y:=(0_{k\times 1},x)$. It can be observed that $y\in C:=Ker(H)$. This implies $d(C)<d$, giving a contradiction.
\end{proof}
Let $C$ be a $[n,k,d]$ code with the systematic form of the parity-check matrix being $[I_{n-k}:P_{(n-k)\times k}]$. Now let us consider the matrix, \[
		H'=\begin{bmatrix}
			O & P\\
			P^\transpose & O
		\end{bmatrix},
		\]then:

\begin{proof}[Proof of Claim \ref{clm:distance-lem}]
    To find the minimum distance of the code  corresponding to the parity check matrix $H'$, we consider the solution of the system of equations:
    $$H'x=O_{n\times 1},$$
    where $x\in\mathbb{F}^{(n)}$.  The above system of equation is equivalent to solving the following set of equations
    \begin{align*}
        Px'=O_{(n-k)\times 1}\quad\tx{and}\quad P^\transpose x''=O_{k\times 1},\quad\tx{where }x'=\begin{bmatrix}
        x_{n-k+1}\\
       \vdots\\x_{n}\end{bmatrix},\,x''=\begin{bmatrix}
        x_{1}\\
       \vdots\\
        x_{n-k}
    \end{bmatrix}.
    \end{align*}
    Now consider the following cases for the column ranks of $P$ and $P^\transpose$:
    \begin{itemize}
        \item $\rank(P)<k$: This implies that the system $Px'=O_{(n-k)\times 1}$ has a non-zero solution, say $\alpha$ $\in\mathbb{F}_{q}^k$  and assume it is of the least Hamming weight. It is evident that $\alpha$ can be extended to a non-zero solution of $Hx=O$ with Hamming weight $\wgt_\Hamm(\alpha)$. This gives the lower bound $\wgt_\Hamm(\alpha)\geq d(C)\geq \min\{d(C),\; d(C^{\perp})\}$.
        \item $\rank(P^\transpose)<n-k$: This case can be handled similarly as the previous case and will give a lower bound of $d(C^{\perp})\geq \min\{d(C),d(C^{\perp})\}$ on the minimum distance of the code given by parity check matrix of $H'$.
        \item $\rank(P)=k$ and $\rank(P^\transpose)=n-k$: 
        If $\rank(P)=k$ and $\rank(P^\transpose)=n-k$ then using the fact that column-rank of a matrix is equal to its row-rank, we get that:
        $n-k\geq k$, corresponding to $\rank(P)=k$ and likewise for $\rank(P^\transpose)=n-k$ we get $k\geq n-k$. This implies $k=n/2$, and the claim is then easy.\qedhere
    \end{itemize}

\begin{proof}[Proof of Claim \ref{clm:error-weight}]
Let $H\in\{0,1\}^{m\times n}$ such that:
\[H:=\begin{bmatrix}
    H_{1,1} & H_{1,2} & . &. & H_{1,n}\\
    & & .& & &\\
    & & .& & &\\
    & & .& & &\\
    H_{m,1} & H_{m,2} &. &. & H_{m,n}
\end{bmatrix}\]
By the Kronecker product representation of tensor product, we have:
    \[H\otimes I=\begin{bmatrix}
        H_{1,1}\cdot I_{n\times n} &. & .& . & H_{1,n}\cdot I_{n\times n}\\
        & &\emph{.} & &\\
        &  &\emph{.} & &\\
        H_{m,1}\cdot I_{n\times n} & . &. & . &H_{m,n}\cdot I_{n\times n} 
    \end{bmatrix}\]
    This implies,
    \[\wgt_\Hamm(E^{i})=\wgt_\Hamm(H_{i,1}\cdot x^{1}+...+H_{i,n}\cdot x^{n}),\]
    where $x=(x^{1},...,x^{n})$ with $x^{l}\in\mathbb{F}_{2}^{n}$ for $1\leq l\leq n$.

    Now by triangle inequality, we obtain:
    \[\wgt_\Hamm(E^{i})=\wgt_\Hamm(H_{i,1}\cdot x^{1}+...+H_{i,n}\cdot x^{n})\leq \wgt_\Hamm(x^{1})+...+\wgt_\Hamm(x^{n})= \wgt_\Hamm(x),\]
    and the second-last inequality follows from the fact that $H_{i,j}\in\{0,1\}$
\end{proof}
\end{proof}

We now reduce the noisy syndrome exact recovery problem to that of usual syndrome decoding in the absence of noise. We use the following  observation repeatedly:

\begin{proof}[Proof of Claim \ref{noiseless syndrome reduction}]
    The syndrome equation for $E$ can be rewritten as:
    \begin{equation}
\begin{aligned}
s^{\mathrm{obs}}
&= [\, I_{2k \times 2k} : H' \,]
\begin{bmatrix}
E\\
a
\end{bmatrix} \\
&=
\begin{bmatrix}
I & O & : &P & O\\
O & I & : & O & P
\end{bmatrix}
\begin{bmatrix}
E\\
a
\end{bmatrix}
\end{aligned}
\end{equation}

    Based on the structure of the coefficient matrix, we obtain the system of equations:
    \[s^{obs}_{1}=[I:P]\begin{bmatrix}
        E^{1}\\
        a^{1}
    \end{bmatrix}\] and
 \[s^{obs}_{1}=[I:P]\begin{bmatrix}
        E^{2}\\
        a^{2}
    \end{bmatrix}\] 

where
    $E^{i}=\begin{bmatrix}
        E_{k\cdot (i-1)+1}\\
        E_{k\cdot (i-1)+1}\\
        .\\
        .\\
        E_{k\cdot i}\\
        \end{bmatrix}
       $
       and,
        $a^{i}=\begin{bmatrix}
        a_{k\cdot (i-1)+1}\\
        .\\
        .\\
        a_{k\cdot i}
    \end{bmatrix}$
    where $i\in\{1,2\}$.

    The two systems of equations are nothing but two instances of (noiseless) syndrome decoding. 
\end{proof}
\begin{remark}
    The above proof also works (upto minor modifications) for the case when $H:=\begin{bmatrix}
        O & P\\
        P^{t} & O
    \end{bmatrix}$, where we obtain syndrome decoding equations corresponding to the code and its dual.
\end{remark}

\subsection{Review of Sipser-Spielman Decoding}

    In this section we will give a variant of the Sipser-Spielman \cite{SipserSpielman1996expander} decoder for bipartite expander codes. 
     We first review the bit-flip decoding algorithm provided by Sipser-Spielman codes and then provide a variant that is useful for decoding HGP codes obtained from bipartite vertex expanders. The variant is based on the fact that while for classical codes the decoding algorithm
     has access to both the received vector and the syndrome, in the quantum case we only have access to the syndrome bits. 

    Here we look at the bipartite graph defined by the parity-check matrix of the code, where the left partition corresponds to the variable nodes and the right partition corresponds to the check nodes. There is an edge between a variable node and a check node if the variable appears in that particular row of that parity-check matrix. In Algorithm \ref{alg:expander-decoding}, the received vector is placed on the variable nodes, and the syndrome vector is placed on the check nodes. In Algorithm \ref{alg:expander-decoding}, the idea is that for every variable bit with more unsatisfied check bits as its neighbors than the satisfied check bits, we flip the value of  that particular variable bit and update the corresponding check bits. Algorithm \ref{alg:expander-decoding-flags} pursues a similar idea. Here instead of having access to both the received vector and the syndrome vector, the decoder has access to only the syndrome vector. Also, for every variable node we associate a flag variable. The flag variables are a placeholder for the received vector and are initialized to the all zero vector. The syndrome variant proceeds in a similar fashion like the previous algorithm. The idea is that for every variable node with more unsatisfied check bits as its neighbors than the satisfied check bits, we flip the value of the associated flag variable and update the corresponding check bits. After the termination of the Algorithm \ref{alg:expander-decoding-flags}, we claim that the vector in the flag variables contains the $X$(or $Z$-) error vector -- see the formal proof in Claim \ref{clm: ss-syndrome-decoding}.

\begin{algorithm}[H]
\caption{Sipser-Spielman Decoding Algorithm \cite{SipserSpielman1996expander}}
\label{alg:expander-decoding}
\begin{algorithmic}[1]
\Require A $(c,d)$-regular graph $B$ between variables and constraints, and an assignment of values to the variables.
\Ensure A codeword (values of the variables), or \textsc{failed to decode}.

\vspace{0.5em}
\State \textbf{Set-up phase:}
\State For each constraint, determine whether or not it is satisfied by the variables.
\State Initialize sets $S_0, \ldots, S_c$ to empty sets.
\For{each variable $v$}
    \State Count the number $i$ of unsatisfied constraints in which $v$ appears.
    \State Place $v$ in set $S_i$.
\EndFor

\vspace{0.5em}
\State \textbf{Loop:}
\While{$S_{\lceil c/2 \rceil}, \ldots, S_c$ are not all empty}
    \State Find the greatest $i$ such that $S_i$ is not empty.
    \State Choose a variable $v$ from set $S_i$.
    \State Flip the value of variable $v$.
    \For{each constraint $C$ that contains variable $v$}
        \State Update the status of constraint $C$.
        \For{each variable $w$ in constraint $C$}
            \State Recompute the number of unsatisfied constraints in which $w$ appears.
            \State Move $w$ to the set indexed by this number.
        \EndFor
    \EndFor
\EndWhile

\vspace{0.5em}
\If{all variables are in $S_0$}
    \State \textbf{output} the values of the variables.
\Else
    \State \textbf{report} \textsc{failed to decode}.
\EndIf
\end{algorithmic}
\end{algorithm}

\begin{algorithm}[H]
\caption{Syndrome Decoding Sipser-Spielman variant}
\label{alg:expander-decoding-flags}
\begin{algorithmic}[1]
\Require A $(c,d)$-regular graph $B$ between variables $\{v_1,\ldots,v_n\}$ and constraints, and an assignment of values to the variables.
\Ensure A codeword (values of the variables) together with flip flags $f_1,\ldots,f_n$, or \textsc{failed to decode}.

\vspace{0.5em}
\State \textbf{Set-up phase:}
\State For each constraint, determine whether or not it is satisfied by the variables.
\State Initialize sets $S_0, \ldots, S_c$ to empty sets.
\For{each variable $v_j$, $j = 1, \ldots, n$}
    \State $f_j \gets 0$ \Comment{flag = 1 iff $v_j$ has been flipped}
    \State Count the number $i$ of unsatisfied constraints in which $v_j$ appears.
    \State Place $v_j$ in set $S_i$.
\EndFor

\vspace{0.5em}
\State \textbf{Loop:}
\While{$S_{\lceil c/2 \rceil}, \ldots, S_c$ are not all empty}
    \State Find the greatest $i$ such that $S_i$ is not empty.
    \State Choose a variable $v_j$ from set $S_i$.
    \State Flip the value of variable $v_j$.
    \State $f_j \gets 1$ \Comment{mark $v_j$ as flipped}
    \For{each constraint $C$ that contains variable $v_j$}
        \State Update the status of constraint $C$.
        \For{each variable $w$ in constraint $C$}
            \State Recompute the number of unsatisfied constraints in which $w$ appears.
            \State Move $w$ to the set indexed by this number.
        \EndFor
    \EndFor
\EndWhile

\vspace{0.5em}
\If{all variables are in $S_0$}
    \State \textbf{output} flip flags $f_1,\ldots,f_n$.
\Else
    \State \textbf{report} \textsc{failed to decode}.
\EndIf
\end{algorithmic}
\end{algorithm}
\newpage
In Figure \ref{syndrome-SS} and Figure \ref{syndrome-SS2}, we illustrate the working of Algorithm \ref{alg:expander-decoding-flags} for some error $x$. We demonstrate how the flag variables are updated when a variable bit is flipped.
\begin{figure}[H]
\centering

\begin{tikzpicture}[
    vnode/.style={circle, draw=blue!80, fill=blue!5, thick, minimum size=8mm},
    cnode/.style={rectangle, draw=black!80, fill=gray!10, thick, minimum size=7mm},
    flagged/.style={circle, draw=orange!90, fill=orange!20, ultra thick, minimum size=8mm},
    unsat/.style={rectangle, draw=red!80, fill=red!10, thick, minimum size=7mm},
    node distance=1.2cm
]

\node[vnode] (V1) at (0, 0)  {$x_1$}; 
\node[left=2pt of V1, font=\footnotesize] {$f_1=0$};

\node[flagged] (V2) at (0, -2) {$x_2$}; 
\node[left=2pt of V2, font=\footnotesize, orange!80!black] {\textbf{$f_2=0$}};

\node[vnode] (V3) at (0, -4) {$x_3$}; 
\node[left=2pt of V3, font=\footnotesize] {$f_3=0$};

\node[unsat] (C1) at (4, -0.5) {1}; 
\node[cnode] (C2) at (4, -2) {0}; 
\node[unsat] (C3) at (4, -3.5) {1};

\draw[thick, orange] (V2) -- (C1); 
\draw[thick] (V2) -- (C2);
\draw[thick, orange] (V2) -- (C3);

\draw (V1) -- (C1); \draw (V1) -- (C2);
\draw (V3) -- (C2); \draw (V3) -- (C3);

\node[above=0.5cm of V1, font=\bfseries] {Variables};
\node[above=0.5cm of C1, font=\bfseries] {Checks};

\end{tikzpicture}
\caption{Initialized  state for Algorithm \ref{alg:expander-decoding-flags} where each flag variable is set to $0$.}
\label{syndrome-SS}
\end{figure}

\begin{figure}[H]
\centering
\begin{tikzpicture}[
    vnode/.style={circle, draw=blue!80, fill=blue!5, thick, minimum size=8mm},
    cnode/.style={rectangle, draw=black!80, fill=gray!10, thick, minimum size=7mm},
    flagged/.style={circle, draw=orange!90, fill=orange!20, ultra thick, minimum size=8mm},
    unsat/.style={rectangle, draw=red!80, fill=red!10, thick, minimum size=7mm},
    node distance=1.2cm
]

\node[vnode] (V1) at (0, 0)  {$x_1$}; 
\node[left=2pt of V1, font=\footnotesize] {$f_1=0$};

\node[flagged] (V2) at (0, -2) {$\bar{x}_{2}$}; 
\node[left=2pt of V2, font=\footnotesize, orange!80!black] {\textbf{$f_2=1$}};

\node[vnode] (V3) at (0, -4) {$x_3$}; 
\node[left=2pt of V3, font=\footnotesize] {$f_3=0$};

\node[unsat] (C1) at (4, -0.5) {0}; 
\node[cnode] (C2) at (4, -2) {1}; 
\node[unsat] (C3) at (4, -3.5) {0};

\draw[thick, orange] (V2) -- (C1); 
\draw[thick] (V2) -- (C2);
\draw[thick, orange] (V2) -- (C3);

\draw (V1) -- (C1); \draw (V1) -- (C2);
\draw (V3) -- (C2); \draw (V3) -- (C3);

\node[above=0.5cm of V1, font=\bfseries] {Variables};
\node[above=0.5cm of C1, font=\bfseries] {Checks};

\end{tikzpicture}
\caption{After flipping second bit}
\label{syndrome-SS2}
\end{figure}

\begin{claim}\label{clm: ss-syndrome-decoding}
    Let $C_{G}$ be an expander code defined with respect to a $(c,d)$ biregular $(\delta,c(1-\epsilon))$ bipartite vertex expander, $G$, with $\epsilon\in(0,1/4)$. Then every error $e\in\mathbb{F}_{2}^{n}$, $\wgt_\Hamm(e)\leq \delta(1-2\epsilon)n $ can be corrected in linear time by Algorithm \ref{alg:expander-decoding-flags}.
\end{claim}
\begin{proof}
    By the correctness of the Sipser-Spielman bit-flip decoding Algorithm \ref{alg:expander-decoding} as mentioned in \cite{SipserSpielman1996expander}, the flag vector contains the error vector, e, if $\wgt_\Hamm(e)\leq \delta (1-2\epsilon)n$.
\end{proof}

\subsection{Syndrome version of Z\'emor's Decoding Algorithm}
In this subsection we mention a variant of the Z\'emor's Decoding algorithm that has access to only the syndrome vector as in Algorithm \ref{algo:syndrome-zemor}. We start by first recalling the original Z\'emor's decoding algorithm \ref{algo:zemor} as mentioned in \cite{zemor2002expander}, which has access to the received word and the syndrome vector. We then provide Algorithm \ref{algo:syndrome-zemor} where we only have access to the syndrome vector. To the best of our knowledge, the syndrome-based version was not known before. In Algorithm \ref{algo:zemor}, the left and right partitions of the bipartite graph are of size $n$ with each node having a constant degree $d$ and the received vector lying on the edges. We also have an inner code $C_{0}$ of length $d$ associated with the decoder. For every node, the decoder considers the edges incident on the node and the corresponding bits of the received vector lying on these edges. For any node $v$, we call these bits the ``local picture'' of the node $v$. If the local picture of node $v$ is not a codeword in the inner code $C_{0}$ then it is first decoded to the nearest codeword (in Hamming metric) in $C_{0}$. In Algorithm \ref{algo:syndrome-zemor}, the only thing available is the syndrome vector. As the syndrome equations are a system of linear equations over $\mathbb{F}_{2}$ and with the original error vector as a solution, the system is feasible. Hence using any linear system solver over $\mathbb{F}_{2}$, the decoder first finds a vector, say $r$, that satisfies the syndrome equations. Having done this, the decoder then makes a call to Algorithm \ref{algo:zemor} with $r$ as the received vector. 

It is important to note that given the parity-check matrix $H$ for the inner code $C_{0},$ we can construct $\mathbf{H}_{global}$ by using the fact that every vertex the set of constraints are the ones induced by $H$. We illustrate this in the figure below.  

\begin{tikzpicture}[scale=0.8]

\node[circle, draw, thick, fill=blue!15, minimum size=22pt, font=\normalsize] (v) at (0, 0) {$v$};

\foreach \i/\lbl in {1/u_1, 2/u_2, 3/u_3} {
  \node[circle, draw, thick, fill=red!15, minimum size=22pt, font=\small]
    (u\i) at (3.0, 2.2 - \i*1.3) {$\lbl$};
}

\node[font=\large] at (3.0, -2.7) {$\vdots$};

\node[circle, draw, thick, fill=red!15, minimum size=22pt, font=\small]
  (ud) at (3.0, -3.8) {$u_d$};

\draw[thick] (v) -- node[above, font=\small] {$e_1$} (u1);
\draw[thick] (v) -- node[above, font=\small] {$e_2$} (u2);
\draw[thick] (v) -- node[above, font=\small] {$e_3$} (u3);
\draw[thick] (v) -- node[below, font=\small] {$e_d$} (ud);

\draw[->, thick, gray] (3.8, -0.8) -- (5.5, -0.8);
\node[font=\small, gray, above] at (4.6, -0.8) {syndrome};
\node[font=\small, gray, below] at (4.6, -0.8) {constraint};

\draw[thick, fill=blue!8] (7.5, 0.8) rectangle (10.5, -2.4);
\node[font=\normalsize] at (9.0, -0.8) {$H$};

\draw[decorate, decoration={brace, amplitude=5pt}, thick]
  (7.5, 1.1) -- (10.5, 1.1)
  node[midway, above=6pt, font=\small] {$d$ columns};

\draw[decorate, decoration={brace, amplitude=5pt, mirror}, thick]
  (7.2, 0.8) -- (7.2, -2.4)
  node[midway, left=6pt, font=\small, align=center] {$r$ rows};

\draw[thick] (11.2, 0.8) rectangle (11.8, -2.4);
\node[font=\small, rotate=0] at (11.5, -0.8) {$x|_v$};

\draw[decorate, decoration={brace, amplitude=4pt}, thick]
  (12.0, 0.8) -- (12.0, -2.4)
  node[midway, right=5.5pt, font=\small, align=left,rotate=270] {$d \times 1$};

\node[font=\large] at (12.8, -0.6) {$=$};

\draw[thick, fill=yellow!15] (13.5, 0.8) rectangle (14.1, -2.4);
\node[font=\small, rotate=00] at (13.8, -0.8) {$s_v$};

\draw[decorate, decoration={brace, amplitude=4pt}, thick]
  (14.4, 0.8) -- (14.4, -2.4)
  node[midway, right=5pt, font=\small, align=left, rotate=270] {$r \times 1$};

\node[font=\small, align=center, text width=5cm] at (9.0, -3.8) {%
  $x|_v = \bigl(x_{e_1},\; x_{e_2},\; \ldots,\; x_{e_d}\bigr)^\top$\\[4pt]
  {\footnotesize (bits on the $d$ edges incident to $v$)}
};

\end{tikzpicture}

\begin{minipage}{0.85\textwidth}
\captionof{figure}{Local view of $\mathbf{H}_{\mathrm{global}}$ at vertex $v$. The row block for $v$ (highlighted) has only $d$ nonzero columns, placed at the edges $E_v = \{e_1, \ldots, e_d\}$ incident to $v$, reducing to the local syndrome equation $H \cdot x|_v = s_v$.}
\label{fig:local-view}
\end{minipage}
\begin{algorithm}[H]
\caption{Z\'emor's Decoding Algorithm}
\label{algo:zemor}
\begin{algorithmic}[1]
\Require Bipartite $d$-regular expander $G = (L \cup R, E)$ with $|L| = |R| = n$;
    component code $C_0$, a $[d, r_0 d, \delta_0]$ code;
    received word $x \in \mathbb{F}_2^{|E|}$
\Ensure Decoded codeword $x \in \mathcal{C}$, or \textsc{Failure}
\Statex \Comment{$x|_v$ denotes the $d$ bits of $x$ on edges incident to vertex $v$}
\State $V^{(0)} \gets L$;\quad $V^{(1)} \gets R$
\State $i \gets 0$
\While{$\exists\, v \in L \cup R$ such that $x|_v \notin C_0$}
    \For{each $v \in V^{(i \bmod 2)}$ with $x|_v \notin C_0$}
        \State Decode $x|_v$ to its nearest codeword in $C_0$
        \State Update $x|_e$ for all $e \in E_v$ accordingly
    \EndFor
    \State $i \gets i + 1$ \Comment{alternate sides: $L \to R \to L \to \cdots$}
\EndWhile
\If{$x|_v \in C_0$ for all $v \in L \cup R$}
    \State \Return $x$
\Else
    \State \Return \textsc{Failure}
\EndIf
\end{algorithmic}
\end{algorithm}

\bigskip

\begin{algorithm}[H]
\caption{Syndrome-Based Z\'emor's Decoding}
\label{algo:syndrome-zemor}
\begin{algorithmic}[1]
\Require Bipartite $d$-regular expander $G = (L \cup R, E)$ with $|L| = |R| = n$;
    inner code $C_0$, a $[d, r_0 d, \delta_0 d]$ code with parity-check matrix
    $H \in \mathbb{F}_2^{r \times d}$;
    syndrome vector $(s_v)_{v \in L \cup R}$ where $s_v \in \mathbb{F}_2^{r}$
\Ensure Error estimate $\hat{e} \in \mathbb{F}_2^{|E|}$, or \textsc{Failure}
\Statex
\Statex \textbf{Step 1:} Find any vector consistent with the syndromes.
\State Construct the global syndrome equation $\mathbf{H}_{\mathrm{global}}\, r = \mathbf{s}$,
    where $\mathbf{H}_{\mathrm{global}}$ has a copy of $H$ at each vertex and
    $\mathbf{s} = (s_v)_{v \in L \cup R}$
\State Using Gaussian elimination, find $r \in \mathbb{F}_2^{|E|}$ such that
    $\mathbf{H}_{\mathrm{global}}\, r = \mathbf{s}$
\If{no solution exists}
    \State \Return \textsc{Failure}
\EndIf
\Statex
\Statex \textbf{Step 2:} Run standard Z\'emor's decoding on $r$ as the received word.
\State $c \gets \textsc{Z\'emorDecode}(G, C_0, r)$
\If{$c = \textsc{Failure}$}
    \State \Return \textsc{Failure}
\EndIf
\State $\hat{e} \gets r \oplus c$ \Comment{equivalently $r - c$ over $\mathbb{F}_2$}
\State \Return $\hat{e}$
\end{algorithmic}
\end{algorithm}

\subsubsection{Proof of correctness for Algorithm \ref{algo:syndrome-zemor}}
\begin{proof}
For the syndrome vector, compute a $r\in\mathbb{F}_{2}^{|E|}$ such that all the syndrome equations are satisfied. This can be done in $O(n^3)$ time via a linear system solver. Now, assume the data error on the transmitted codeword was $e$ with $\Delta(e)\leq \delta_{0}^{2}/4$, where $\Delta(x)$ is the fractional Hamming weight. Now, as $\mathbf{H}_{\mathrm{global}}\cdot r=s=\mathbf{H}_{\mathrm{global}}\cdot e$ hence $r+C=e+C$ and hence $r-e\in C$ or $r=c+e$ for some codeword $c$, where $\mathbf{H}_{\mathrm{global}}$ is the parity-check matrix for the expander code defined on the spectral expander. Now, by the correctness of Zémor's Decoding Algorithm, we have that on decoding the received vector $r$, we obtain the codeword $c$ and hence $r-c=e$. 
\end{proof}

\end{document}